\documentclass[12pt, reqno]{amsart}

\usepackage[thm_section]{macros}
\newgeometry{margin=1.25in}
\raggedbottom
\title{Certified Decisions}
\author{Isaiah Andrews}
\author{Jiafeng Chen}

\date{February 24, 2025. Andrews: Department of Economics, Massachusetts Institute of
Technology,
 iandrews@mit.edu; Chen: Stanford Institute for Economic Policy Research, Stanford
 University, jiafeng@stanford.edu. We thank Guido Imbens, Lihua Lei, Jos\'e Montiel Olea,
 Jonathan Roth, Suproteem Sarkar, Chris Walker, and Davide Viviano for useful discussions.
 We thank
 ChatGPT for intermittently expert research assistance.}

\pagestyle{plain}

\newcommand{\M}{\mathcal{M}}
\newcommand{\MDM}{\mathcal{M}_{\mathrm{DM}}}
\newcommand{\MDMP}{\mathcal{M}_{\mathrm{DM}}^{(P,\alpha)}}
\begin{document}
\maketitle

\begin{abstract} 
Hypothesis tests and confidence intervals are ubiquitous in empirical
 research, yet their connection to subsequent decision-making is often unclear. We develop
 a theory of \emph{certified decisions} that pairs recommended decisions with inferential
 guarantees. Specifically, we attach P-certificates---upper bounds on loss that hold with
 probability
 at least $1-\alpha$---to recommended actions. We show that such certificates allow
 ``safe,'' risk-controlling adoption decisions for ambiguity-averse downstream
 decision-makers. We further prove that it is without loss to limit attention to
 P-certificates arising as minimax decisions over confidence sets, or what \citet
 {manski2021econometrics} terms ``as-if decisions with a set estimate.'' A parallel
 argument applies to E-certified decisions obtained from e-values in settings with
 unbounded loss.
\end{abstract}

\newpage

\section{Introduction}

Inferential tools, such as hypothesis tests and confidence intervals, are ubiquitous in
empirical practice.  At the same time many empirical papers aim, directly or indirectly,
to improve decision-making by potential consumers of research, such as policymakers,
firms, or households.  As highlighted by \citet{manski2021econometrics}, the connection
between inference and decision-making is unclear: outside of special cases it is rare that
optimal decisions, in the tradition of \citet{wald1949statistical}, depend on the data
through the results of conventional inference procedures.\footnote{Though see
\citet{tetenov2016testing, batesetal2024testing,vivianoetal2025testing} for some
interesting exceptions.}  There is thus an apparent tension between the goal of informing
decision-making and conventional, inference-focused, methods for empirical research.

Despite this tension---or perhaps because of it---some procedures in the literature aim
to bridge the gap between inference and decision-making.  In particular, when
researchers report confidence sets, one natural recipe for decisions is to proceed
\emph{as if} the confidence set enumerates all plausible parameter values, and choose
actions that minimize the worst-case loss over the confidence set. For $\mathcal A$ an
action space, $L(a, \theta)$ a loss function, and $\hat\Theta(Y)
\subset \Theta$ a data-dependent confidence set with $\P\{\theta\in\hat\Theta(Y)\}\ge1-\alpha$, we may choose actions \[
    \delta(Y) \in \argmin_{a\in \mathcal A} \sup_{\theta \in \hat\Theta(Y)} L(a, \theta). 
    \numberthis
    \label{eq:asif_p}
\]
\citet{manski2021econometrics} terms decision rules of this form ``as-if decisions with
set estimates.''   Decision rules of this form are implicitly envisioned by Neyman,
who writes \citep[][emphasis ours]{neyman1977frequentist}
\begin{quote}
     The problem of {confidence
    intervals} consists in determining two
    functions of the observables [, $Y_1(\cdot) \le Y_2(\cdot)$,]
to be used in the following manner: Whenever the observable variables \( X \) assume some
values [$x$], we shall calculate the corresponding values of \(
Y_1 \) and \( Y_2 \), say \( Y_1(x) < Y_2(x) \), and \emph{then assert (or act on the
assumption) that}
\[
Y_1(x) \leq \theta \leq Y_2(x).
\]
\end{quote}
More recently, \citet{ben2021safe} and \citet{chernozhukov2025policylearningconfidence} select policies by minimizing the worst-case loss over a confidence set for a social welfare function.

A number of recent papers have also considered the other direction, first making a
statistical decision and then constructing confidence sets for the resulting loss or welfare.
\citet{kitagawa2018ewm} select a policy by maximizing an empirical welfare function, and
in their supplementary materials discuss how one may construct a confidence set for the
true welfare via simultaneous inference and projection \citep[see, also,][]
{ponomarev2024lower}. The
recent literature studying inference on winners
\citep{benjamini2019winners,andrews2024inference,zrnic2024winners,zrnic2024aoswinners}
generalizes this approach, considering selection of the ``best'' of several options based
on noisy estimates of loss or welfare, and then constructing bounds for the true loss
associated with estimated ``best'' choice.  For $\hat L(a;Y)$ an estimate of the loss
associated with option $a$ based on data $Y,$ this corresponds to choosing
\[\delta(Y)=\argmin_{a\in\mathcal A} \hat L(a;Y)\]
and then (in the one-sided case) constructing a loss bound $R(Y)$ with 
\begin{equation}\label{eq: P-certificate}
P\{L(\delta(Y), \theta)\le R(Y)\}\ge 1-\alpha.
\end{equation}

What, if anything, do these exercises---forming decisions based on confidence sets or
confidence sets for the loss of recommended decisions---accomplish?  This paper provides
an answer, considering the value of providing \emph{certificates} for the loss associated
with a recommended decision.  In particular, we consider a data analyst who provides a
decision-maker with a recommended decision and estimated risk bound $(\delta(Y), R(Y)).$
If this pair satisfies \eqref{eq: P-certificate}, we say that $R(Y)$ is a P-certificate
for $L(\delta(Y),\theta),$ and say that $(\delta(Y), R(Y))$ is a \emph{P-certified
decision}.  The loss bounds $R(Y)$ derived in the literature studying inference on winners
are thus P-certificates by construction. For as-if decision-making, note that if we define
 \begin{equation}\label{eq: as-if R}
     R(Y)=\sup_ {\theta \in \hat\Theta(Y)} L(\delta(Y), \theta)
 \end{equation} then $\theta\in\hat\Theta(Y)$ implies $R(Y)\ge L(\delta(Y),\theta),$ so
  \[ P\br{L(\delta(Y), \theta) \le R(Y)} \ge P(\theta \in
\hat\Theta) \ge 1-\alpha.\]
Hence, both procedures discussed above imply corresponding P-certificates.

We show that P-certified decisions provide useful guarantees for an ambiguity-averse
decision-maker who has limited capacity for data analysis and delegates it to an analyst.
Specifically, we consider a decision-maker who needs to decide whether to adopt an action
recommended by the data analyst, or to instead implement a safe default action with
known (but potentially large) loss. This decision-maker believes that the analyst's
estimate $R(Y)$ contains information about the loss of the recommended action, but has
limited knowledge of the data-generating process and wishes to protect themselves against
a large increase in risk uniformly over a potentially large class of data generating
processes.

P-certificates are useful to such a decision-maker.  Specifically, for a decision-maker
with nonnegative loss bounded by $1$ and a default action costing $C \in [0,1)$, any rule
that adopts the recommended action $\delta(Y)$ only when $R(Y)\le C$ yields expected loss
no greater than $C + \alpha (1-C)$, meaning that the decision-maker faces a risk increase
of an most $\alpha (1-C).$  Moreover, over the class of non-randomized adoption rules, the
optimal adoption decision is simply the threshold rule $\one(R (Y) \le C)$.

This result speaks only to the optimal action for a decision-maker presented with a given P-certified decision.  We next study optimal provision of P-certificates, and show that it is closely connected to as-if optimization.
We find that the class of as-if decisions \`a la \eqref{eq:asif_p} and \eqref{eq: as-if R} is essentially complete under a minimally weak
preference ordering for P-certified decisions.\footnote{This result echoes---though it is
developed independently from---a recent result by \citet{kiyani2025decision}, in the
context of conformal prediction.} That is, any P-certified decision $ (\tilde
\delta, \tilde R)$ can be weakly improved by \emph{some as-if decision} $(\delta, R)$
corresponding to some confidence set, in the sense that $R(\cdot) \le \tilde R(\cdot)$
almost surely and the loss estimate $R(\cdot)$ is thus weakly tighter than the loss
estimate $\tilde R(\cdot)$.  This establishes that all reasonable P-certified decisions
can be cast as as-if optimization \eqref{eq:asif_p}, but it does not distinguish certain
confidence sets from others.  This result has some bite, and implies for instance that the
studentized projection approach discussed in e.g.  \cite{andrews2024inference} is
dominated by the approaches of \citet{chernozhukov2025policylearningconfidence}.

With more structure on the decision problem, we further show that some optimality results for inference translate to optimality for P-certificates.  In particular, if
parameter is scalar and the loss function prefers high parameter values (e.g. implementing
treatments with higher effects yields lower loss), then uniformly most accurate
confidence lower bounds for $\theta$ are optimal so long as we prefer smaller values of $R(Y)$. 

Finally, as an extension, we consider certified decisions that correspond to alternative
approaches to frequentist inference. A rapidly growing literature proposes \emph
{e-values} as alternatives to p-values and confidence sets \citep[see][for a review]
{ramdas2024hypothesis}. We show that e-variables generate a type of as-if
decisions---called E-posterior decisions by 
\citet{grunwald2023posterior}---that provide an
\emph{E-certificate} of the form $\E [L (\delta,\theta) / R(Y)] \le 1$.  Similarly, we show
that the class of such as-if decisions is essentially complete among E-certified
decisions: for any E-certified $(\tilde\delta, \tilde R)$, there exists e-variables and
corresponding E-posterior decisions $(\delta(\cdot), R(\cdot))$ where $R(\cdot) \le
\tilde R(\cdot)$ almost surely.
We show that E-certified decisions imply risk bounds for downstream decision-makers even in contexts with unbounded losses.

This paper is most similar to recent work by \citet{kiyani2025decision}, who study
decision-making in the context of conformal inference. Similar to our results in
\cref{sec:optimal_p_certificates} but preceding this paper, \citet{kiyani2025decision}
show that the problem of minimizing the expected value of an $1-\alpha$ upper bound for
loss is equivalent to  the problem of searching for a conformal set with coverage
$1-\alpha$ and minimizing the expected minimax loss against the conformal prediction set
(see \eqref{eq:kiyani}). The key to both arguments is the observation that a
probabilistic upper bound on loss can be inverted to form a confidence set.

\Cref{sec:ambiguity_averse_decision-makers_and_p_certified_decisions} sets up a stylized model
of an ambiguity-averse decision-maker and discusses guarantees provided by P-certificates.
\Cref{sec:optimal_p_certificates} discusses optimal P-certificates. \Cref
{sec:extension_e_certificates} extends our results to E-certificates.

\section{Ambiguity-averse decision-makers and P-certified decisions}
\label{sec:ambiguity_averse_decision-makers_and_p_certified_decisions}

Consider a decision-maker facing a loss function $L(a,\theta) \in [0,1]$ for an action
$a\in
\mathcal A$ and an unknown parameter
$\theta \in \Theta$. The decision-maker tasks an analyst to analyze the data $Y \in
\mathcal Y$ and to provide a recommended action $\delta(Y) \in \mathcal A$. The data $Y$
is sampled from a distribution $P$. To model uncertainty, we assume that Nature may choose
any $ (\theta, P)$ pairs from a set $\mathcal M$. For instance, if $(\theta, P)$ specifies
a statistical model indexed by $\theta$, then $\mathcal M = \br{(\theta, P_\theta):
\theta
\in \Theta}$.  More generally, we allow $\theta$ may fail to fully describe $P,$ for instance due to nuisance parameters or model incompleteness.

The decision-maker has access to a default action $a_0\not\in\mathcal{A}$, which yields a
loss $C$ which is known to the decision-maker and does not depend on $\theta.$ To assess
whether to adopt the analyst's recommendation instead, the decision-maker also asks for an
assessment of the loss $L(\delta (Y),\theta)$. In this section and the next, we assume
that this assessment takes the form of a high-probability upper bound on the loss, which
we call a $(1-\alpha)$ \emph{P-certificate} $R(Y)$ for
$\delta(Y)$.\footnote{$\delta(\cdot), R (\cdot)$ are allowed to depend on external
randomization devices. For compactness, we subsume such external random variables into
$Y$.}
\begin{defn}
	For some $\alpha \in (0,1)$, the pair $(\delta(\cdot), R(\cdot)) : \mathcal Y \to
	\mathcal A
	\times [0,1]$ is called a $1-\alpha$ \emph{P-certified decision} if, for every $
	(\theta, P) \in \M$, \[
		P\pr{L(\delta(Y), \theta) \le R(Y)} \ge 1-\alpha.
		\numberthis
		\label{eq:p-cert}
 	\]
 	When this holds, $R(Y)$ is called a $(1-\alpha)$ \emph{P-certificate} for $\delta
 	(Y)$.
\end{defn}

The guarantee \eqref{eq:p-cert} is both familiar and convenient, since many familiar statistical procedures 
yield guarantees of this form.
In particular, if the analyst computes a $1-\alpha$ confidence set
$\hat\Theta(Y)$, as-if minimax optimization over the confidence set provides such a
guarantee: For $\delta(Y) \in \argmin_{a\in \mathcal A} \sup_{\theta \in \hat\Theta(Y)} L
(a, \theta)$ and $R(Y) = \min_{a\in \mathcal A} \sup_{\theta \in \hat\Theta(Y)} L(a,
\theta)$,\footnote{If the $\argmin$ does not exist, then for any $\epsilon > 0$, we may
choose $(\delta (Y), R (Y))$ such that
\[
\sup_{\theta \in
\hat\Theta(Y)} L(\delta(Y), \theta) \le \inf_{a\in \mathcal A} \sup_{\theta \in
\hat\Theta(Y)} L(a, \theta) + \epsilon
\] 
and $R(Y) = \sup_{\theta \in \hat\Theta(Y)} L(\delta(Y),
\theta) $. }
the guarantee
\eqref{eq:p-cert} is a consequence of coverage:
\[
	 P\br{L(\delta(Y), \theta) \le R(Y)} \ge P\{\theta \in \hat\Theta(Y)\}
	  \ge
	1-\alpha.
    \numberthis
    \label{eq:confidence}
\]
We shall see in \cref{sec:optimal_p_certificates} that the converse is also essentially
true: Any reasonable $(\delta, R)$ pair may be derived from as-if optimization from some
confidence set. 

For now, we consider a decision-maker's adoption decision given
a P-certified decision.
We suppose that the decision-maker
encounters considerable ambiguity, considering a model $\MDM$ that is potentially much
larger than $\M$.  Given a P-certified decision, however, the decision-maker confines attention to
those $ (\theta, P)$ pairs under which \eqref{eq:p-cert} holds \[
	\MDMP = \br{(\theta, P) \in \MDM: P\bk{L(\delta(Y), \theta) \le R(Y)} \ge 1-\alpha}.
\]
Assume that the decision-maker has some prior $\pi$ over $\MDMP$, which could for instance be a prior
on $\MDM$ truncated to $\MDMP$. 

Suppose that after observing $(\delta(Y),R(Y))$ the decision-maker adopts the recommended
action, $Q=1,$ with probability probability $q(a,r) \equiv \P(Q=1 \mid \delta(Y)=a, R
(Y)=r) $ while otherwise, $Q=0$, they select the safe action $a_0.$ The decision-maker
wishes to choose the option rule $q$ from some class $\mathcal{Q}$ to minimize minimize
expected loss under her prior, subject to the constraint that the worst-case expected loss
is not much worse than $C$. That is, let $\delta_Q(Y) = Q\delta(Y) + (1-Q)a_0$, where for
$\tau \in (0, 1-C]$ the decision-maker chooses $q$ to solve:
\begin{align*}
\min_{q(\cdot, \cdot) \in \mathcal Q} & \E_{Q\sim q, \pi} \bk{ L(\delta_Q(Y), \theta) }
        \text{ subject to } \sup_{(\theta, P) \in \MDMP}
    \E_P[L(\delta_Q(Y),
	\theta)] \le C + \tau.
    \numberthis \label{eq:DM_preference}
\end{align*}

The preference \eqref{eq:DM_preference} minimizes Bayes risk under $\pi$,
subject to the constraint that the worst-case risk is bounded. The preference
\eqref{eq:DM_preference} encodes that the decision-maker is conservative, ambiguity-averse,
but persuadable. If the decision-maker were consumed by ambiguity aversion, then she would
implement the default action and would have no use for the analyst. In contrast, under
\eqref{eq:DM_preference}, she is willing to hazard most an excess risk of $\tau$ for
plausible rewards when the worst case over $\MDMP$ is overly paranoid.

The preference \eqref{eq:DM_preference} is also reminiscent of the ambiguity-averse
preference (DP) considered in \citet{banerjee2020theory}. \citet{banerjee2020theory}
consider a weighted average between a Bayes expected utility and the worst-case expected
utility. One may interpret this weighted average as a Lagrange-multiplier form of
\eqref{eq:DM_preference}.

P-certified decisions let the decision-maker ensure that the worst-case risk constraint is
respected, while still sometimes adopting the recommendation. Suppose the decision-maker
never accepts a recommendation when $R(Y) > C$. If they accepts the recommendation with
probability at most $u \in [0,1]$, $\sup_{a,r} q(a,r)\le u$, then they incur no more than
$u\alpha (1-C)$ in excess risk, regardless of how large $\MDM$ is.
\begin{restatable}{prop}{proppupperbound}
\label{prop:p_upper_bound}
    For $u \in [0,1]$, any adoption decision $q(a,r) \le u\one(r \le C)$ has maximum risk
    \[
        \sup_{(\theta, P) \in \MDMP} \E_P[L(\delta_Q(Y), \theta)] \le C + u \alpha
        (1-C).
    \]
\end{restatable}
\Cref{prop:p_upper_bound} assures the decision-maker that she does not have to tolerate
much excess risk if she adopts whenever $R(Y)$ indicates that $\delta(Y)$ is plausibly
preferable to the default action.\footnote{Since $\P\{L(\delta_Q(Y),\theta)>R(Y)\}\le\alpha$ and $L(\delta_Q(Y), \theta)\le 1$, we  also obtain the following $P$-specific bound, which is markedly better if $R (Y)$ can be much smaller than $C$:
\[
        \E_P\bk{
            L(\delta_Q(Y), \theta)
        } \le \alpha + \E_P[QR(Y) + (1-Q)C].
    \]
    If $Q = \one(R(Y) \le C)$, this bound simplifies to $\alpha + \E_P
    [\min(R(Y), C)]$.} To further control the excess loss, she may either adopt
less, in the sense of choosing a smaller $u$, or ask that the analyst provide recommendations with higher confidence, requesting a larger $1-\alpha$.

Motivated by \cref{prop:p_upper_bound}, we may ask for what beliefs $\pi$ and sets of rules $\mathcal{Q}$ the problem \eqref{eq:DM_preference} yields the simple the simple threshold decision rule $q(a,r) = u\one(r \le
C).$  To this end, we impose a few restrictions.  

First, suppose the decision-maker's beliefs $\pi$, the expected loss is upper bounded by $R(Y)$:
$\E_\pi[L(\delta (Y), \theta) \mid R (Y) =r ] \le r$. Moreover, suppose that the
decision-maker's model $\MDMP$ is sufficiently rich so that any joint distribution over $(L,R)=(L
(\delta(Y), \theta), R(Y)) \in [0,1]^2$ consistent with the P-certificate constraint \eqref{eq:p-cert} is rationalized by some member of $\MDMP$.

Second, suppose that the decision-maker only considers simple adoption decisions based on
$R(Y)$ and not on $\delta(Y)$: i.e., $q(a,r) = q (r)$. This restricts to only ``simple''
adoption decisions, for instance because the decision-maker either does not observe
$\delta(Y)$ when deciding to adopt or has a difficult time assessing different
recommendations $\delta(Y)$ on their own merits.

Third, suppose the decision-maker commits to adopting with at least probability $u$ for
some assessment $R(Y) = r$: $\sup_{r \in [0,1]} q(r) \ge u$. For instance, it may be
reasonable to impose $q(0) = 1$ so that decisions with the maximally favorable assessment
is always adopted. If we further restrict decision rules to be non-randomized, then this
condition (with $u=1$) is equivalent to the decision-maker committing to sometimes accept
the analyst's recommendation.

Finally, under the preceding assumptions, one can show (see \cref
{lemma:q_condition} below) that the worst-case excess risk is bounded above by $u \alpha
(1-C).$  Assume the decision-maker is sufficiently conservative that this constraint
binds, $\tau=u \alpha(1-C).$

Under these conditions, we can show
that $u \one(r \le C)$ is optimal for \eqref{eq:DM_preference}.

\begin{restatable}{prop}{propadoption} 
\label{prop:adoption}
Let $u
\in [0,1]$. Consider \[
        \mathcal Q(u) = \br{q(a,r) = q(r): \sup_{r \in [0,1]} q(r) \ge u}.
    \]
     Suppose the decision-maker's beliefs satisfy:
    \begin{enumerate}
         \item  $\pi$ is optimistic: $\E_\pi[L(\delta(Y),
    \theta)
    \mid R(Y) = r] \le r$
    \item  $\MDMP, R(\cdot), \delta(\cdot)$ is sufficiently rich: For any joint
    distribution $G$ over $(L, R) \in [0,1]^2$ with $G(
    \br{L\le R}) \ge
    1-\alpha$, there exists $ (\theta, P) \in \MDMP$ under which $(L(\delta(Y), \theta) ,
    R (Y)) \sim G$.
     \end{enumerate} 
    Then the optimal adoption decision over $\mathcal Q = \mathcal Q(u)$ for
    \eqref{eq:DM_preference} with $\tau = u\alpha(1-C)$
    is $q(r) = u\one(r \le C)$. 
\end{restatable}

The intuition for \cref{prop:adoption} is simple. Since the decision-maker commits to
accepting with probability at least $u$, this exhausts the excess risk budget $\tau$.
Accepting any recommendation with $R(Y) > C$ strictly increases the worst-case risk over
a rich $\MDMP$. On the other hand, accepting any $R (Y) = r  \le C$ with probability less
than $u$ worsens the objective in
\eqref{eq:DM_preference} without relaxing the constraint. Thus, the optimal acceptance
decision is $u\one(r \le C)$. In particular, if the decision-maker commits to
non-randomized decision rules $q(r) \in \br{0,1}$ and to accept some recommendations
($\sup_{r} q(r) = 1$), then the optimal acceptance decision is immediately $q(r) = \one(r
\le C)$. 
\Cref{sec: optimal adoption} furthermore characterizes the form of optimal adoption rules
when we do not restrict to $q\in\mathcal Q(u),$ though their form depends on the details
of the prior.

Together, \cref{prop:p_upper_bound} and \cref{prop:adoption} show that P-certified
decision rules are compatible with simple adoption decisions for an ambiguity-averse
decision-maker willing to tolerate some increase in her worst-case risk relative to a safe
option. Thus, P-certified decisions offer downstream decision-makers---who may be less
familiar with the statistical environment---simple insights from data that guide
decisions, without sacrificing (much) safety. 

Provision of P-certified decisions requires that the analyst knows the decision-maker's
loss function $L (a, \theta)$, but does not require knowledge of the prior $\pi$ nor the
cost of default option $C$. To further reduce information
requirements, via
\eqref{eq:confidence} it is even possible for the analyst to simply communicate a
 confidence set $\hat \Theta(Y)$ without knowledge of the loss function, and ask the
 decision-maker to find $(\delta(Y), R(Y))$ via as-if optimization.

We make two remarks before illustrating with an example. First, we note that the results
\cref{prop:p_upper_bound} and \cref{prop:adoption} apply to \emph{any} P-certified
decisions, including the trivial certificate where $R(Y) = 1$ with
probability $1-\alpha,$ independent of $\delta(Y)$. Thus, these results do not require that the analyst be
particularly skilled in data analysis, nor that they necessarily shares the same objective as the decision-maker, just that the analyst be able to credibly provide  P-certificates.

Second, the validity of P-certificates is to some extent verifiable. To prove that
$(\delta(\cdot), R(\cdot))$ is a P-certified decision, when the loss function $L$ is
money-metric the analyst can in principle offer insurance policies that pay out $1$ unit
of utility if it turns out that $L(\delta(Y), \theta) > R(Y)$, in exchange for $\alpha$
units of insurance premium. The decision-maker can purchase any amount of such policies
before seeing the data. If $ (\delta (\cdot), R (\cdot))$ is indeed a P-certificate,
offering such insurance has nonnegative expected value for the analyst.\footnote{As
\citet{neyman1937outline} writes, ``The theoretical statistician constructing [a
confidence interval] may be compared with the organizer of a game of chance in which the
gambler has a certain range of possibilities to choose from while, whatever he actually
chooses, the probability of his winning and thus the probability of the bank losing has
permanently the same value, [$\alpha$].''}

\subsection{Example: inference on winners}\label{sec: inference on winners}

To illustrate our results, we consider an example based on the recent literature on inference on winners \citep{benjamini2019winners,andrews2024inference,zrnic2024winners,zrnic2024aoswinners}.  Consider a decision-maker interested in potentially implementing an action $a$ from a finite set $\mathcal{A}.$  For instance, the decision-maker could be a municipal policymaker deciding considering whether to implement a program, as studied in \citet{bergman2024cmto}, that encourages low-income households with children to move to high-opportunity neighborhoods.  In this context, each $a\in\mathcal{A}$ could index a potential set of ``recommended'' neighborhoods, while the default action $a_0$ corresponds to the status quo of not making any recommendation.

The decision-maker's goal is to maximize some bounded outcome, for instance the average  household income rank in adulthood (relative to the adult income distribution) for children in the targeted households.  Specifically, for $\theta(a)\in[0,1]$ the (unknown) true outcome associated with selection $a,$ the policymaker would  like to choose actions $a$ that solve $\max_{a\in\mathcal{A}}\theta(a)$.  The policymaker also faces a fixed  cost $\kappa$ of making a recommendation at all, for instance because such recommendations have not been made in the past and this new program needs to be explained to the public.  To represent this decision-maker's preferences we can consider the loss function $L(a,\theta)=1-\theta(a),$ and take $C=1-\theta(a_0)-\kappa$ where for simplicity we treat $\theta(a_0)$ as known.

Suppose the decision-maker is assisted by an analyst, who uses data (e.g. Census-tract
level measures for economic opportunity   together with a model for long-term outcomes as
in \citet{bergman2024cmto}) to form an estimate $X(a)$ for the outcome associated with
each possible action $a,$ along with an associated standard error $\sigma(a)$, which we
treat as fixed for simplicity. The analyst, but not the decision-maker, knows the joint
distribution of estimation error, where for $a\in\mathcal{A}$, $\frac{X(a)-\theta(a)}
{\sigma(a)}\sim Z(a)$. In this setting, we can take $Y=(X(\cdot),\sigma(\cdot))$ to
collect the estimates and standard errors.  

The analyst may select a recommended action via empirical welfare maximization, taking
$\delta(Y)=\argmax_{a\in\mathcal{A}}{X(a)}$.  To form a P-certificate, they could take $R
(Y)=1-(X(\delta(Y))-c_{1-\alpha}),$ for $c_{1-\alpha}$ the $1-\alpha$ quantile of the
maximum error $\sup_{a\in\mathcal{A}}Z(a)\sigma(a).$ This bound, corresponding to an
unstudentized version of what \citet{andrews2024inference} call the projection approach,
was suggested in the supplementary materials of \citet{kitagawa2018ewm}.  Alternatively,
the analyst could consider studentized projection and report $R^*(Y)=1-(X(\delta
(Y))-c^*_{1-\alpha} \sigma(\delta(Y))),$ for $c^*_{1-\alpha}$ the $1-\alpha$ quantile of
the maximum studentized error $\sup_{a\in\mathcal{A}}Z(a).$ Studentization ensures that
the width of the interval reflects the standard error for the recommended policy $\delta
(Y),$ and further ensures that e.g. the inclusion of a single, very nosily estimated
option $a$ in the choice set does not greatly influence the critical value.  For this
reason, the theoretical discussion of projection in \citet{andrews2024inference} focuses
primarily on the studentized approach.

Given either $(\delta(Y),R(Y))$ or $(\delta(Y),R^*(Y)),$ the decision-maker must decide
whether to implement to action $\delta(Y).$  The preference \eqref
{eq:DM_preference} captures that they decisonmaker seek to maximize the expected outcome
under some joint prior $\pi$ on $\theta$ and the distribution of $Y.$ However, since the
decision-maker does not know the distribution of $Y$ (beyond their knowledge that the
analyst provides a P-certificate), they limit attention to adoption rules $q$ such that
risk is not greatly increased relative to the status quo, even in the worst case.  Under
the conditions of \cref{prop:adoption}, we have shown that it is then optimal for the
decision-maker to adopt (perhaps with some probability) when $R(Y)\le C.$  By the duality
between tests and confidence sets, however, $\one\{R(Y)\le C\}$ and $\one\{R^*(Y)\le C\}$
are size-$\alpha$ tests for the null $L(\delta(Y),\theta)\ge C,$ so \cref
{prop:adoption} implies that the decision-maker adopts the recommended action only when
the analyst rejects the null that it increases loss relative to the status quo.

\section{Optimal P-certificates}
\label{sec:optimal_p_certificates}

In settings where P-certificates are useful, it seems reasonable that both analysts and decision-makers will prefer tighter bounds on the loss.  That is, given the choice between level-$1-\alpha$ P-certificates $R(Y)$ and $\tilde{R}(Y)$ such that $R(Y) \le \tilde R (Y)$ for all $Y,$ it is natural to prefer $R$ to $\tilde{R}$.
For instance, if the decision-maker uses the non-randomized adoption rule $Q=\one(R(Y)\le C)$ derived in \cref{prop:adoption}, then  
\[
\E_P[L(\delta_Q,\theta)]\le\alpha+\min(R(Y),C),
\]
and using $R(Y)$ rather than $\tilde{R}(Y)$ yields a tighter risk bound on the loss. More generally, if we consider any loss function $B(R,\theta)$ that is increasing in $R$ for all $\theta,$ then $\E_P[B(R(Y),\theta)]\le\E_P[B(\tilde{R}(Y),\theta)]$ for all $(P,\theta),$ so $R$ yields a lower expected loss than $\tilde{R}$.

When $R$ and $\tilde{R}$ are ordered almost surely, or more generally in the sense of first-order stochastic dominance, we say $R$ weakly dominates $\tilde R$. 
\begin{defn} 
\label{defn:statewise}
We say that a $1-\alpha$ P-certified decision $(\delta, R)$ \emph
 {weakly dominates}  another $1-\alpha$ P-certified decision $
 (\tilde \delta, \tilde R)$  if \[
P( R(Y)\ge r) \le P( \tilde R(Y)\ge r)\]
for all $r\in[0,1]$ and all $(\theta, P) \in\M$.
\end{defn}
Our main result in this section shows that the class of P-certified decisions from as-if optimization against a confidence set is an \emph{essentially complete} class with respect to dominance. That is, any P-certified decision rule $
(\tilde \delta, \tilde R)$ is weakly dominated by some decision rule $(\delta,R)$ formed by as-if optimization via \eqref{eq:confidence}. As an implication, these decision
rules are also essentially complete with respect to any risk function of the form $\E_
{P} [B(R, \theta)]$, for any weakly increasing $B(\cdot, \theta)$. 

To state this result, denote by $\mathcal C_{P,\alpha, \epsilon}$ the class of decision rules formed by ($\epsilon$-approximate) as-if optimization against a confidence set $\hat \Theta
(Y)$ in the following fashion: For $\epsilon > 0$, consider $(\delta(Y), R(Y))$ such that 
\begin{align*}
\sup_{\theta \in \hat\Theta(Y)} L(\delta(Y), \theta) &\le \inf_{a \in \mathcal A} \sup_
{\theta \in \hat\Theta (Y)} L (a,
\theta) +
\epsilon  \\ R(Y) &= \sup_{\theta \in \hat\Theta(Y)} L(\delta(Y), \theta) \\
 P(\theta \in \hat\Theta(Y)) &\ge 1-\alpha \text{ for all $(\theta, P) \in \M$.}
\end{align*} where $\sup$ over an empty set is defined to be $0$. The constant $\epsilon$
 accounts for cases where the minimax loss over $\hat\Theta(Y)$ is not achieved by any element of $\mathcal A$; our result holds for any $\epsilon
 > 0$. 
\begin{restatable}{theorem}{thmdominate}
\label{thm:dominate} For any P-certified decision $(\tilde\delta, \tilde R)$ and any
 $\epsilon > 0$, there exists some $(\delta, R) \in \mathcal C_
 {P, \alpha, \epsilon}$ that weakly dominates $(\tilde \delta, \tilde R)$.
\end{restatable}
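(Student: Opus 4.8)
The plan is to \emph{invert} the given P-certificate into a confidence set and then show that as-if optimization against that set does at least as well. Given the P-certified decision $(\tilde\delta,\tilde R)$, define the data-dependent set
\[
\hat\Theta(Y) = \{\theta\in\Theta : L(\tilde\delta(Y),\theta)\le \tilde R(Y)\}.
\]
For any $(\theta,P)\in\M$, the event $\{\theta\in\hat\Theta(Y)\}$ coincides \emph{exactly} with $\{L(\tilde\delta(Y),\theta)\le\tilde R(Y)\}$, so the P-certificate property \eqref{eq:p-cert} immediately yields $P(\theta\in\hat\Theta(Y))\ge 1-\alpha$; that is, $\hat\Theta$ is a valid $1-\alpha$ confidence set. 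This is the observation, emphasized in the discussion of \citet{kiyani2025decision}, that a probabilistic upper bound on loss can be read directly as a confidence set.

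Next I would run as-if optimization against $\hat\Theta(Y)$ and exhibit the domination. Writing $V(Y)=\inf_{a\in\mathcal A}\sup_{\theta\in\hat\Theta(Y)}L(a,\theta)$ for the minimax value, the central inequality is that $V$ is controlled by $\tilde R$: taking the feasible action $a=\tilde\delta(Y)$ and using that every $\theta\in\hat\Theta(Y)$ satisfies $L(\tilde\delta(Y),\theta)\le\tilde R(Y)$ by construction,
\[
V(Y)\le \sup_{\theta\in\hat\Theta(Y)}L(\tilde\delta(Y),\theta) \le \tilde R(Y).
\]
It then remains to select an $\epsilon$-approximate minimax action $\delta(Y)$ whose reported bound $R(Y)=\sup_{\theta\in\hat\Theta(Y)}L(\delta(Y),\theta)$ satisfies $R(Y)\le\tilde R(Y)$, and not merely $R(Y)\le\tilde R(Y)+\epsilon$. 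The key refinement is that one can always find an action attaining $\sup_{\theta\in\hat\Theta(Y)}L(\delta(Y),\theta)\le\min\{V(Y)+\epsilon,\tilde R(Y)\}$: if $V(Y)+\epsilon\le\tilde R(Y)$, then any $\epsilon$-approximate minimizer works by definition of the infimum; otherwise $\tilde\delta(Y)$ is \emph{itself} an $\epsilon$-approximate minimizer, since its value is $\le\tilde R(Y)<V(Y)+\epsilon$, and it has value $\le\tilde R(Y)$. Either way $(\delta(Y),R(Y))\in\mathcal C_{P,\alpha,\epsilon}$ with $R(Y)\le\tilde R(Y)$ surely, and the edge case of empty $\hat\Theta(Y)$ gives $R(Y)=0$ by the $\sup$ convention.

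Pointwise domination $R(Y)\le\tilde R(Y)$ then gives $\{R(Y)\ge r\}\subseteq\{\tilde R(Y)\ge r\}$, hence $P(R(Y)\ge r)\le P(\tilde R(Y)\ge r)$ for every $r\in[0,1]$ and every $(\theta,P)\in\M$, which is precisely weak dominance in the sense of \cref{defn:statewise}. The main obstacle I anticipate is not the inequalities but \emph{measurability}: the two-case rule defines $\delta(Y)$ only pointwise, so I must argue that $Y\mapsto\delta(Y)$ can be chosen measurably. For finite $\mathcal A$ this is automatic, but in general I expect to invoke a measurable selection theorem (e.g.\ Kuratowski--Ryll-Nardzewski), which requires modest regularity on $\mathcal A$, $\Theta$, and $L$ to guarantee that the value map $V(\cdot)$ and the near-minimizing correspondence are measurable.
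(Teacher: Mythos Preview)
Your proposal is correct and follows essentially the same route as the paper: invert the P-certificate to the confidence set $\hat\Theta(Y)=\{\theta:L(\tilde\delta(Y),\theta)\le\tilde R(Y)\}$, then use the two-case selection (either an $\epsilon$-minimizer or $\tilde\delta(Y)$ itself) to obtain $R(Y)\le\tilde R(Y)$ pointwise. The paper's proof is identical in structure and does not address the measurability issue you flag; it simply works pointwise in $Y$, so your discussion of measurable selection is an addition rather than a divergence.
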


\Cref{thm:dominate} shows a sense in which it is without loss to limit attention to P-certificates derived from as-if optimization with confidence sets.  Thus, not only is as-if optimization a convenient way to generate P-certified decisions, but there is, in a sense, no need to look beyond this recipe.

In the context of conformal prediction, recent work by \citet{kiyani2025decision}
 (their Theorem 2.3)  shows a result similar to \cref{thm:dominate}.\footnote
 {We developed \cref{thm:dominate} independently of their result. } To connect these results, let
 $\vartheta$ be a random prediction target and consider $H$ the joint distribution over $
 (\vartheta, X)$. In a prediction context, $\vartheta$ is the unknown true label of a
 unit and $X$ is the observed features. A conformal prediction set $\hat \Theta
 (X)$ satisfies $H(\vartheta \in \hat\Theta(X)) \ge 1-\alpha$, where the probability is
 taken over $(\vartheta, X) \sim H$. \citet{kiyani2025decision} show that the following
 optimization problems are equivalent, with essentially the same argument as that
 underlying \cref{thm:dominate}
 \begin{align*}
  & \min_ {R (\cdot), \delta(\cdot)} \E_H [R (X)] \text{ subject to } H\br{L(\delta(X),
 \vartheta) \le R (X)} \ge 1-\alpha \\ 
& \min_{\hat\Theta(\cdot)} \E_H\bk{
    \min_{a\in \mathcal A} \max_{\vartheta \in \hat\Theta(X)} L(a, \vartheta)  } \text
     { subject to } H\{\vartheta \in \hat\Theta(X)\} \ge 1-\alpha.
     \numberthis \label{eq:kiyani}
 \end{align*} In contrast, we consider a parameter inference setting, where probabilities
  are solely taken over $Y$ fixing $\theta$, rather than jointly over $
  (X, \vartheta)$. We also consider evaluations of the certificate $R(\cdot)$ beyond its
  expectation, using a weaker preference ordering (\cref{defn:statewise}).

\subsection{Example: inference on winners, continued}

\Cref{thm:dominate} establishes a strong sense in which as-if optimization is reasonable.   Dominance ordering is demanding, however, and there are many $(R,\delta)$, $(\tilde{R},\tilde{\delta})$ pairs that cannot be ordered, in the sense that neither $R(Y)$ nor $\tilde{R}(Y)$ (weakly) stochastically dominates the other.  Consequently, weak dominance in the sense of \cref{defn:statewise} provides only a partial order over the class of certified decisions, and does not generally provide clear recommendations for ``optimal'' procedures.
This partial order is nevertheless sufficient to rule out some methods discussed in the literature.

To illustrate, let us return to the example, introduced in \cref{sec: inference on winners}, of an analyst 
who selects a recommended action via empirical welfare maximization, $\delta(Y)=\argmax_{a\in\mathcal{A}}X(a),$ and forms P-certificate $R(Y)$ or $R^*(Y)$ using unstudentized or studentized projection, respectively.  The certified decision $(\delta,R)$ can be obtained via as-if optimization using the confidence set \[
\hat\Theta(Y)=\br{\theta:\theta(a)\ge X(a)-c_{1-\alpha} \text{ for all }a\in\mathcal{A}},
\]
and so is easily cast into the essentially complete class constructed in \Cref{thm:dominate}.  

By contrast, $(\delta,R^*)$ is less naturally understood via as-if optimization, since $R^*$ corresponds to the studentized confidence set
\[
\hat\Theta^*(Y)=\br{\theta:\theta(a)\ge X(a)-\sigma(a)c^*_{1-\alpha}\text{ for
 all }a\in\mathcal{A}}.\] If the analyst conducts as-if optimization over $\hat\Theta^*
 (Y)$, they  recommend decision $\tilde\delta(Y)=\argmax_{a\in\mathcal{A}}\br{X(a)-\sigma
 (a)c^*_{1-\alpha}},$ which is a special case of the risk-aware empirical maximization
 approach proposed by \citet{chernozhukov2025policylearningconfidence} and can differ
 from $\delta(Y)$ when $\sigma(\cdot)$ is non-constant.  The corresponding risk bound
 $\tilde{R}(Y)=1-\max_{a\in\mathcal{A}}\br{X(a)-\sigma(a)c^*_{1-\alpha}}$ is no larger
 than $R^*(Y)$ in every realization of $Y$, so $(\tilde\delta,\tilde R)$ weakly dominates
 $(\delta,R^*)$.\footnote{Formally, however, if $\Theta=[0,1]^\mathcal{A}$ then $
 (\delta,R^*)$ still belongs to the essentially complete class we study, since it results
 from as-if optimization against the confidence set $\hat\Theta^{**}(Y)=\br
 {\theta\in\Theta:L(\delta(Y),\theta)\le 1-(X(\delta(Y))-\sigma(\delta(Y))c^*_
 {1-\alpha})}.$  Note, however, that $\hat\Theta^{**}(Y)$ is a super-set of $\hat\Theta^*
 (Y),$ which helps explain the superior performance of $(\tilde\delta,\tilde
 {R})$ relative to $(\delta,R^*)$.}

\subsection{Optimality under monotonicity}

The non-uniqueness of optimal certified decisions, in the sense of weak dominance, is
intuitive: \Cref{thm:dominate} shows that P-certificates are closely linked to confidence
sets, and in most problems the class of admissible (i.e. undominated) confidence sets is
large.  If we narrow attention to settings where optimal (uniformly most accurate)
confidence sets exist, one might hope that optimal P-certified decisions will exist as
well and are found by as-if optimization.  The next theorem shows that hope is borne out,
provided the loss is decreasing in the parameter and we limit our comparison of stochastic
dominance to risk values exceeding $ \inf_{a\in\mathcal{A}}L(a,\theta)$.\footnote{Absent
the latter restriction, it is impossible to dominate as-if optimization against the
trivial confidence set which takes $\hat\Theta(Y)$ empty with probability $\alpha,$ and
equal to $\Theta$ with probability $1-\alpha$.}

To state the result, we recall the definition uniformly most accurate confidence bounds:
\begin{defn}[\citet{lehmann1986testing}, p.79--80]
    Suppose $\Theta \subset \R$. A random variable $\hat\theta(Y)$ is a $1-\alpha$
    \emph{confidence lower bound} if $P(\hat\theta(Y) \le \theta) \ge 1-\alpha$ for all $
     (\theta, P) \in \M$. Furthermore, it is \emph{uniformly most accurate} if for any
     $1-\alpha$ confidence lower bound $\tilde \theta(Y)$, any $t > 0$, and any $
     (\theta, P) \in \M$, 
    \[P\{\hat\theta(Y)
    \le \theta - t\}
     \le P\{\tilde \theta(Y) \le \theta - t
    \}.\]
\end{defn}

\begin{restatable}{theorem}{thminc}
\label{thm:inc} Suppose $\Theta \subset \R$ is compact, $L(a, \theta)$ is weakly decreasing
in $\theta$, and $R(\theta)=\inf_{a\in \mathcal A} L(a, \theta)$ is achieved by some $a\in \mathcal
A$ for every $\theta \in \Theta$. Let $\hat\theta (Y)$ be a $1-\alpha$ \emph{uniformly
most accurate}  confidence lower bound, assumed to exist. Let $(\delta, R)$ be a certified
decision that as-if optimizes against $ [\hat\theta(Y), \infty) \cap \Theta$:
 \[R(Y) \equiv L(\delta(Y), \hat\theta(Y)) = \inf_{a\in \mathcal A} L(a, \hat\theta(Y)).\]
  Then for any other $1-\alpha$ P-certified $(\tilde\delta,
 \tilde R)$ and any $r>R(\theta),$
 \[P(R(Y)\ge r)\le P(\tilde{R}(Y)\ge r)\]
\end{restatable}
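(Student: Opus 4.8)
The plan is to invert the arbitrary P-certificate $\tilde R$ into a confidence lower bound, compare that bound to $\hat\theta$ using uniform most accuracy, and then translate the conclusion back into a statement about the loss certificates via monotonicity. Throughout, write $m(\theta')=\inf_{a\in\mathcal A}L(a,\theta')$ for the pointwise minimal loss; this is weakly decreasing (an infimum of weakly decreasing functions) and, by hypothesis, achieved for every $\theta'$. By construction the as-if certificate satisfies $R(Y)=m(\hat\theta(Y))$, so that $\{R(Y)\ge r\}=\{\hat\theta(Y)\in A_r\}$, where $A_r=\{\theta'\in\Theta:m(\theta')\ge r\}$ is a downward-closed (lower) set because $m$ is decreasing. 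Writing $s_r=\sup A_r$, we thus have $P(R(Y)\ge r)=P(\hat\theta(Y)\in A_r)\le P(\hat\theta(Y)\le s_r)$.

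The inversion step is the heart of the argument, and mirrors the logic of \cref{thm:dominate}. Given $(\tilde\delta,\tilde R)$, I would set $\tilde\theta(Y)=\inf\{\theta'\in\Theta:L(\tilde\delta(Y),\theta')\le\tilde R(Y)\}$, the smallest parameter value compatible with the reported bound. Two monotonicity facts drive everything. First, since $L(\tilde\delta(Y),\cdot)$ is weakly decreasing, $\{L(\tilde\delta(Y),\theta)\le\tilde R(Y)\}\subseteq\{\tilde\theta(Y)\le\theta\}$; combined with the P-certificate guarantee this gives $P(\tilde\theta(Y)\le\theta)\ge P(L(\tilde\delta(Y),\theta)\le\tilde R(Y))\ge1-\alpha$, so $\tilde\theta$ is a genuine $1-\alpha$ confidence lower bound, which is precisely what licenses the uniform-accuracy comparison against $\hat\theta$. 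Second, for every $\theta'\in\Theta$ with $\theta'>\tilde\theta(Y)$ one has $L(\tilde\delta(Y),\theta')\le\tilde R(Y)$, hence $m(\theta')\le\tilde R(Y)$; consequently $\{\tilde\theta(Y)<\theta_0\}\subseteq\{\tilde R(Y)\ge m(\theta_0)\}$ for each fixed $\theta_0\in\Theta$, so that $P(\tilde R(Y)\ge r)\ge P(\tilde\theta(Y)<\theta_0)$ whenever $\theta_0\in A_r$.

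The restriction $r>R(\theta)$ is exactly what makes the comparison feasible. It forces $\theta\notin A_r$, and since $A_r$ is downward closed this yields $s_r\le\theta$; the relevant threshold therefore lies weakly below $\theta$, where the uniform-accuracy inequality $P(\hat\theta(Y)\le\theta-t)\le P(\tilde\theta(Y)\le\theta-t)$ (for $t>0$) is available. Chaining the upper bound $P(R(Y)\ge r)\le P(\hat\theta(Y)\le s_r)$, uniform accuracy at the threshold, and the lower bound $P(\tilde R(Y)\ge r)\ge P(\tilde\theta(Y)<\theta_0)$ then delivers the claim. (The footnote's trivial confidence set shows why the restriction cannot be dropped: it is the requirement $s_r\le\theta$, i.e. $r>R(\theta)$, that keeps the comparison strictly below $\theta$.)

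The \textbf{main obstacle} is the behaviour at the crossing point $s_r$, where the two bounds meet. Because $m$ may jump and $\tilde\theta$ may carry an atom at $s_r$, the naive containments differ in whether the threshold is open or closed --- concretely, whether $\sup A_r$ is attained in $A_r$, and whether the infimum defining $\tilde\theta(Y)$ is attained --- so a single threshold $s_r$ will not simultaneously give $\le$ on the $\hat\theta$ side and $\ge$ on the $\tilde\theta$ side, leaving an unmatched term $P(\tilde\theta(Y)=s_r)$. I would close this gap with a limiting argument: establish $P(R(Y)\ge r)\le P(\tilde R(Y)\ge r')$ for every $r'\in(R(\theta),r)$ by selecting an intermediate threshold $\theta_0\in\Theta$ with $s_r<\theta_0$ and $m(\theta_0)\ge r'$ (available since $A_{r'}\supseteq A_r$, using compactness of $\Theta$ to locate such a parameter value), applying uniform accuracy at $\theta_0\le\theta$, and then letting $r'\uparrow r$ and invoking left-continuity of $r\mapsto P(\tilde R(Y)\ge r)$. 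Compactness of $\Theta$ together with the hypothesis that $m(\theta)=\inf_a L(a,\theta)$ is achieved is what guarantees $\hat\theta$ and $\tilde\theta$ take values in $\Theta$ and that the relevant suprema and infima are well defined, so that this limiting argument goes through.
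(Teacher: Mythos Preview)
Your architecture matches the paper's exactly: invert $(\tilde\delta,\tilde R)$ to the lower confidence bound $\tilde\theta(Y)=\inf\{\theta':L(\tilde\delta(Y),\theta')\le\tilde R(Y)\}$, verify coverage, invoke uniform most accuracy to compare $\tilde\theta$ with $\hat\theta$, and push the comparison through the decreasing map $m$. The paper handles the boundary issue you flag more cleanly than your limiting argument: it observes that UMA is precisely the statement that $\hat\theta(Y)\wedge\theta$ first-order stochastically dominates $\tilde\theta(Y)\wedge\theta$ (for thresholds below $\theta$ this is UMA; at or above $\theta$ both CDFs equal $1$), applies the decreasing transformation $m$---which reverses dominance---to obtain that $m(\tilde\theta(Y))\vee m(\theta)$ dominates $m(\hat\theta(Y))\vee m(\theta)$, and reads off the conclusion for $r>m(\theta)$. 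No limits in $r$, no case splits.

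Your proposed limiting argument has a gap. You claim that for each $r'\in(R(\theta),r)$ there exists $\theta_0\in\Theta$ with $s_r<\theta_0$ and $m(\theta_0)\ge r'$, citing $A_{r'}\supseteq A_r$. But the containment can be an equality: if $m$ has a single jump, say $m\equiv 1$ on $[\min\Theta,\theta)$ and $m\equiv 0$ on $[\theta,\max\Theta]$, then $A_r=A_{r'}=[\min\Theta,\theta)$ for every $r,r'\in(0,1]$, so $s_r=\theta$ and no $\theta_0>s_r$ lies in $A_{r'}$. The fix within your framework is a case split on whether $s_r\in A_r$: if so, $m(s_r)\ge r>m(\theta)$ forces $s_r<\theta$ and UMA applies directly at the closed threshold $s_r$; if not, $A_r=[\min\Theta,s_r)\cap\Theta$, and letting $t\downarrow 0$ in the UMA inequality (with $s_r\le\theta$) gives $P(\hat\theta(Y)<s_r)\le P(\tilde\theta(Y)<s_r)$. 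Either branch yields $P(\hat\theta(Y)\in A_r)\le P(\tilde\theta(Y)\in A_r)$, which is what you need before the final step $m(\tilde\theta(Y))\le\tilde R(Y)$. The paper's FOSD device is just a tidier packaging of this dichotomy.
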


\subsection{Example: choosing a treatment proportion}

\cref{thm:inc} does not apply to the inference on winners problem since the parameter of
 interest $\theta$ is vector-valued, and there does not in general exist a uniformly most
 accurate confidence set.  Consider instead a setting where the decision-maker is
 considering a single treatment, for instance enrollment in a job training program, and
 must decide what fraction $a\in[\varepsilon,1]$ of a population randomize into
 treatment. Below $a=\varepsilon>0$ they treat no one, $a_0=0.$  The average outcome
 under treatment is $\theta,$ while the (known) average outcome under control is $\rho$.
 There is again a fixed cost $\kappa$ of treating anyone, and there is also a variable
 cost $\psi(a)$ of treating fraction $a>0$ of the population where we assume $\psi(a)$ is
 continuous and increasing, reflecting e.g. the increasing cost of hiring large numbers
 of qualified instructors. If the decision-maker would like to maximize the average
 outcome in the population, net of costs, then we can represent their preferences with
 loss function $L(a,\theta)=a(1-\theta)+\psi(a)$ for $a\in\mathcal{A},$ along with loss
 $C=(1-\rho)-\kappa$ for the default action $a_0=0.$  Note that the loss is decreasing in
 $\theta$ for all $a$ by construction, and $\inf_{a\in\mathcal{A}} L(a,\theta)$ is
 attained for all $\theta.$

Suppose the analyst observes a normally distributed estimate $X\sim N(\theta,\sigma^2)$ for the average outcome under the treatment, where $\sigma$ is fixed and known to the analyst and $Y=(X,\sigma)$.  The uniformly most accurate level 1-$\alpha$ lower confidence bound for $\theta$ is $\hat\theta(Y)=X+\sigma z_{\alpha},$ for $z_\alpha$ the $\alpha$-quantile of a standard normal distribution.  By \cref{thm:inc}, the optimal P-certified decision in this example takes $\delta(Y)=\argmin_{a\in \mathcal{A}}L(a,\hat\theta(Y))$ and $R(Y)=(\delta(Y),\hat\theta(Y))$.  If the decision-maker then implements the recommended treatment when $R(Y)\le C,$ this corresponds exactly to implementing treatment (for some portion of the population) when the analyst is able to reject that 
$\theta \le \bar\theta$ for a threshold $\bar\theta$ such that $\inf_{a}R(a,\bar\theta)=C.$

\section{Extension: E-certificates}
\label{sec:extension_e_certificates}

We have shown that P-certified decisions offer a natural route to combine
(frequentist) statistical inference with decision-making.  However, high-probability
guarantees, as appearing in confidence intervals and hypothesis tests, are not the only
type of frequentist guarantee available. A recent literature in statistics proposes \emph
{e-values} as an alternative and a complement to traditional statistical inference \citep
[see][for a review]{ramdas2024hypothesis}. An \emph{e-variable} against the parameter
value $\theta_0$ is a nonnegative random variable $E(Y,
\theta_0)$
such that \[
    \E_P[E(Y, \theta_0)] \le 1 \text{ for all $P$ such that $(\theta_0, P) \in \mathcal
    M$}.
\]
One can interpret $E(\cdot, \theta_0)$ as the realized dollar payoff of a bet,
priced at \$1, against the hypothesis $H_0: \theta = \theta_0$. Since the expected payoff
under $H_0$ is less than \$1, this bet is (at most) fair. Correspondingly, a large realization of $E
(Y, \theta_0)$ can be interpreted as evidence against $H_0$---in the sense that it is an
unlikely event under $H_0$---much like a small p-value is evidence against $H_0$.

As with confidence sets, we can use e-variables to certify the quality of decisions.
Throughout this section, we assume $L(a,\theta) > 0$, but no longer assume $L(a, \theta) \le 1.$

\begin{defn}
\label{defn:e-cert}
     For fixed $\gamma > 0$, a pair $(\delta(\cdot), R(\cdot))$ is an \emph{E-certified decision}
 at multiple $\gamma$    if \[
    \E_P\bk{\frac{L(\delta(Y), \theta)}{R(Y)}} \le \gamma \text{ for all $(\theta, P)\in
    \M$}.
\]
We abbreviate E-certified decisions at multiple 1 as E-certified decisions.
 \end{defn} 
 As \cref{defn:e-cert} highlights, $R(Y)$ is a stochastic upper bound for loss in
 the sense that the ratio $L(\delta(Y), \theta)/R(Y)$ has ex ante expectation bounded
 below some prespecified $\gamma$.

Similar to our previous results, one can derive E-certified decisions using a version of as-if optimization, now 
defined using e-variables.
 \citet{grunwald2023posterior} proposes what he terms the E-posterior: if there exists an
  e-variable $E(\cdot, \theta)$ for every $\theta \in \Theta$, the E-posterior is defined
  as $1/E(Y, \theta)$, with the convention $1/0 = \infty$. \citet
  {grunwald2023posterior} proposes computing the action that minimizes the worst-case
  E-posterior-weighted loss, along with the minimized value. Formally, since minimization
  over $\mathcal A$ might not be attained by any element in $\mathcal A$, let $\mathcal
  C_ {E,\epsilon}$ be the following class of $(\delta, R)$ such that, for some collection
  of e-variables $E(\cdot, \theta)$, 
 \begin{align*}
R(Y) \equiv \sup_
 {\theta \in \Theta} \frac{L (\delta(Y),\theta)}{E(Y, \theta)} \le \inf_{a\in \mathcal A}
 \sup_{\theta \in \Theta} \frac{L(a, \theta)}{E
     (Y, \theta)} + \epsilon \numberthis   \label{eq:e-posterior}
 \end{align*}
 Intuitively, for choosing $\delta(\cdot)$, states $\theta$ deemed implausible by the
 data---those with large $E (Y,\theta)$---have their losses downweighted, while those with
 small $E (Y,\theta)$ have their loss upweighted. \citet{grunwald2023posterior} shows that
 $(\delta(Y), R(Y))$ is an E-certified decision at multiple 1.\footnote{Here, our setup is
 slightly different since the minimization is up to a constant $\epsilon$. Nevertheless,
 observe that since $R(Y) \ge L(\delta(Y), \theta) / E(Y,\theta)$, \[
     \E_P[L(\delta(Y), \theta) / R(Y)] \le E_P[E(Y,\theta)] \le 1. 
 \]}

We next show that, as with P-certificates in \cref{thm:dominate}, it is without loss to limit attention to as-if optimization, since the set
of E-certified decisions of the form \eqref{eq:e-posterior} is an essentially complete
class with respect to the dominance order.
\begin{restatable}{prop}{thmedominate}
\label{thm:edominate}
    For any E-certified decision $(\tilde\delta, \tilde R)$ and any $\epsilon > 0$, there
    exists some collection of $e$-variables $E(Y, \theta)$ for which the corresponding
    E-certified decisions $ (\delta,R) \in \mathcal C_{E, \epsilon}$ chosen via
    \eqref{eq:e-posterior} weakly dominates $(\tilde \delta, \tilde R)$.
\end{restatable} 

Like our results for P-certificates, using E-certificates also provide guarantees when
combined with downstream decision-making. For a decision-maker who never adopts when $R(Y) >
C$, the decision-maker at most doubles the cost of the default action $C$. Unlike our results for
P-certificates, such a guarantee does not require that the loss function be bounded above.

 \begin{restatable}{prop}{propeadoption}
 \label{prop:eadoption}
Given an E-certified decision rule $(\delta, R)$, for any adoption decision $Q
\le \one(R(Y) \le C)$ a.s.,  \[
         \E_P\bk{
             L(\delta_Q(Y), \theta)
         } \le \E_P\bk{
             \max\pr{\frac{L(\delta(Y),\theta)}{R(Y)}, 1}C
         } \le 2C \text{ for all $(\theta, P) \in \M$}
     \]
 \end{restatable}

Motivated by \cref{prop:eadoption}, if we only sought certificates with
$\E_P\left[\max\pr{\frac{L (\delta (Y),\theta)} {R (Y)}, 1}\right] \le 1+\gamma$ to start
with, $R(Y)$ can often be improved by slightly modifying the E-posterior. We detail this
improvement in \cref{asec:truncateep}.

\section{Conclusion}

This paper considers combining statistical inference with statistical decisions. In two
leading modes of frequentist inference, statistical inference translates to \emph
{certified decisions}---decisions paired with certificates of their performance. We
show that these certified decisions provide ambiguity-averse downstream decision-makers
with useful risk guarantees and simple adoption decision rules---which may be implemented
and understood without statistical sophistication on part of the decision-maker. Moreover,
we show that it is without loss to base certified decisions on statistical inference procedures, and thus that inferential and decision goals are, at the very least, partially
aligned.

\appendix

\section{Proofs}

\proppupperbound*

\begin{proof}
    Fix $(\theta, P) \in \MDMP$, and let $A$ be the event that $L(\delta(Y), \theta) \le
    R(Y)$. Then \begin{align*}
    \E_P[L(\delta_Q(Y), \theta) - C] &= \E_P[Q(L(\delta(Y), \theta) - C)]  \\
    &\le \E_P\bk{
        Q \one(A) (L(\delta(Y), \theta) - C) + Q\one(A^C)(1-C)
    } \\
    &= \E_P\bk{
        q(\delta(Y), R(Y)) \one(A) (L(\delta(Y), \theta) - C) + q(\delta(Y), R(Y)) \one(A^C)(1-C)
    }
    \end{align*}
    Note that \[
        q(\delta(Y), R(Y)) \one(A) \pr{L(\delta(Y), \theta) - C} \le 0
    \]
    since $q(\delta(Y), R(Y))=0$ when $R(Y)>C,$ and $R(Y)\le C$, $\one(A) = 1$ jointly imply $L(\delta(Y), \theta) \le R
    (Y) \le C$. On the other hand, \[
        \E\bk{q(\delta(Y), R(Y)) \one(A^C)(1-C)} \le u (1-C) \E[\one(A^C)] \le u \alpha 
        (1-C). 
    \]
    Hence, $\E_P[L(\delta_Q(Y), \theta) - C] \le u \alpha (1-C)$, as desired. 
\end{proof}

\propadoption*
\begin{proof}
    By \cref{lemma:q_condition}, the constraint is satisfied only if \[
        \sup_{r \ge C} q(r) (r-C) = 0
    \]
    by plugging in $a=\alpha$ and $\sup q(r) \ge u$ to \eqref{eq:q_condition}. This
    implies that $q(r) = 0$ for all $r > C$. Similarly, one can show that $\sup_{r \in 
    [0,1]} q(r) = u$ for all $q$ that satisfies the constraint of 
    \eqref{eq:DM_preference}. Thus all $q$ satisfying the constraint obey $q(r) \le u\one(r \le C)$. It
    suffices to show that $u\one (r \le C)$ weakly dominates all other such $q(r)$ in terms of the objective in \eqref{eq:DM_preference}. 

    Consider $q(r) \le u \one(r \le C)$. Note that \[
        \E_\pi[\pr{q(R) - u\one(r\le C)} (L(\delta(Y), \theta) - C)] \ge 0
    \]
    by the optimism of $\pi$. Therefore, $q(r)$ is weakly dominated by $u\one(r \le C)$.
    Thus $u\one(r \le C)$ is optimal.
\end{proof}

\begin{lemma}
\label{lemma:q_condition}
    Suppose $q(a,r) = q(r)$. Suppose $\MDMP, R(\cdot), \delta(\cdot)$ is sufficiently rich: For any joint
    distribution $G$ over $(L, R) \in [0,1]^2$ with $G(
    \br{L\le R}) \ge
    1-\alpha$, there exists $ (\theta, P) \in \MDMP$ under which $(L(\delta(Y), \theta) ,
    R (Y)) \sim G$. 

    Then, for any $u \in [0,1]$, \[
        \sup_{(\theta, P) \in \MDMP} \E_P\bk{L(\delta_Q(Y), \theta)} \le C + u \alpha 
        (1-C)
    \]
    if and only if \[
        \sup_{a\le \alpha} \br{
            a \sup_{r\in[0,1]} q(r) (1-C) + (1-a) \sup_{r \ge C} q(r) (r-C)
        } \le u \alpha (1-C). 
        \numberthis 
        \label{eq:q_condition}
    \]
\end{lemma}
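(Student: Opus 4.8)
The plan is to evaluate the worst-case risk on the left-hand side in closed form, recognize it as $C$ plus the quantity appearing on the left of \eqref{eq:q_condition}, and then obtain the stated equivalence by subtracting $C$. First I would rewrite the objective. Since $\delta_Q(Y) = Q\delta(Y) + (1-Q)a_0$ and the default action has loss $L(a_0,\theta)=C$ independent of $\theta$, we have $L(\delta_Q(Y),\theta) = Q(L(\delta(Y),\theta)-C) + C$. Because the adoption draw $Q$ is, conditional on $(\delta(Y),R(Y))$, an independent Bernoulli with mean $q(\delta(Y),R(Y)) = q(R(Y))$, the tower property gives
\[
\E_P[L(\delta_Q(Y),\theta)] = C + \E_P[q(R(Y))(L(\delta(Y),\theta)-C)].
\]
Writing $L = L(\delta(Y),\theta)$ and $R = R(Y)$, the problem reduces to evaluating $\sup_{(\theta,P)\in\MDMP}\E_P[q(R)(L-C)]$.

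Next I would invoke the richness assumption to pass from the supremum over $\MDMP$ to a supremum over joint laws of $(L,R)$. Since every $G$ on $[0,1]^2$ with $G(\{L\le R\})\ge 1-\alpha$ is realized by some $(\theta,P)\in\MDMP$, while every admissible pair induces a law obeying the P-certificate constraint, the two suprema coincide:
\[
\sup_{(\theta,P)\in\MDMP}\E_P[q(R)(L-C)] = \sup_{G:\,G(\{L\le R\})\ge 1-\alpha}\E_G[q(R)(L-C)].
\]

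I would then solve the right-hand side as a linear program over distributions. Let $f(\ell,r) = q(r)(\ell-C)$ and split any feasible $G$ according to the regions $\{L>R\}$ and $\{L\le R\}$, with masses $p := G(\{L>R\})\le \alpha$ and $1-p$. Because $q(r)\in[0,1]$ and $1-C>0$, the pointwise suprema of $f$ over the two regions are $M_1 := \sup_{\ell>r}f(\ell,r) = (1-C)\sup_r q(r)$ (taking $\ell\uparrow 1$) and $M_0 := \sup_{\ell\le r}f(\ell,r) = \sup_{r\ge C}q(r)(r-C)$ (taking $\ell = r$, and noting $r<C$ contributes nonpositive values). Bounding each region's contribution by its mass times the corresponding pointwise supremum yields $\E_G[f] \le pM_1 + (1-p)M_0 \le \sup_{a\le\alpha}\{aM_1 + (1-a)M_0\}$ for every feasible $G$; conversely, two-point distributions placing mass $a\le\alpha$ arbitrarily near the $\{L>R\}$-maximizer and mass $1-a$ near the $\{L\le R\}$-maximizer are feasible and attain $aM_1+(1-a)M_0$ up to any $\epsilon>0$. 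Hence the supremum equals $\sup_{a\le\alpha}\{a(1-C)\sup_r q(r) + (1-a)\sup_{r\ge C}q(r)(r-C)\}$, which is exactly the left-hand side of \eqref{eq:q_condition}. Combining the three displays, $\sup_{(\theta,P)\in\MDMP}\E_P[L(\delta_Q(Y),\theta)]$ equals $C$ plus the left-hand side of \eqref{eq:q_condition}, so the claimed equivalence is precisely the statement that this quantity is at most $C+u\alpha(1-C)$.

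The main obstacle is the linear-programming step: correctly identifying the per-region pointwise maxima of $f$---in particular that the binding direction over $\{L\le R\}$ restricts to $r\ge C$ because realized losses below $C$ contribute nonpositively---and then verifying both the upper bound, which must hold for every feasible $G$, and the matching two-point construction, all while handling the fact that the relevant suprema over $r$ need not be attained (so equality of suprema, rather than attainment, is what I would establish).
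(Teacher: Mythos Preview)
Your proposal is correct and takes essentially the same approach as the paper: both reduce the risk to $C + \E[q(R)(L-C)]$, bound the supremum by splitting on $\{L>R\}$ versus $\{L\le R\}$, and use the same two-point construction---mass $a$ at $(L,R)=(1,r^-)$ and mass $1-a$ at $(L,R)=(r^+,r^+)$---for the matching direction. The paper presents the two directions separately (contrapositive for ``only if,'' direct bounding for ``if''), whereas you compute the supremum in closed form; these are the same argument packaged differently.
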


\begin{proof}
    First, consider the only if direction. For contrapositive, suppose there exists $(a,
    r^-, r^+), a \le \alpha, r^+ \ge C$ such that \[
        a q(r^-)(1-C) + (1-a) q(r^+)(r^+-C) > u \alpha (1-C). 
    \]

    Let $G$ be a distribution over $(L,R) \in [0,1]^2$ be such that $G(L=1, R =
    r^-) = a$ and $G(L=R=r^+) = 1-a$. By assumption, there is some $(\theta, P)$
    under which $(L(\delta(Y), \theta), R(Y)) \sim G$. The resulting risk under that
    distribution is \[
        \E_P[q(R) L(\delta(Y), \theta) + (1-q(R)) C] = aq(r^-) (1-C) + (1-a) q(r^+)(r^+-C)
        > ua(1-C). 
    \]
    This proves the only if direction.

    Now, for the if direction, suppose $q$ satisfies \eqref{eq:q_condition}. Fix some $
    (\theta, P) \in \MDMP$. Then for $L=L(\delta(Y), \theta),$ $R=R(Y),$ \begin{align*}
    \E_P\bk{
        L(\delta_Q, \theta) - C
    } &= \E_P\bk{
        Q(L - C) 
    } \\ 
    &\le \E_P\bk{
        Q \one(L > C) (L - C) 
    } \\
    & = \E_P\bk{
        \one(R < L)  \one (C < R) Q(L-C)
    } + \E_P\bk{
        \one(C < L \le R) Q(L-C)
    } \\
    &\le P(L > R) \E_P\bk{
        Q(L-C) \mid L > R
    } \\&\quad+ (1-P(L > R) ) \E_P\bk{
        Q(R-C) \mid C <  L \le R}
    \end{align*}
    Observe that, by conditioning on $R$, \begin{align*}
    \E_P\bk{
        Q(L-C) \mid L > R
    } \le \sup_{r \in [0,1]} q(r) (1-C) \\
     \E_P\bk{
        Q(R-C) \mid C <  L \le R} \le \sup_{r \ge C} q(r) (r-C)
    \end{align*}
    Since $P(L > R) \le \alpha$, we have that \[
        \E_P\bk{
        L(\delta_Q, \theta) - C
    } \le u \alpha(1-C). 
    \]
    by assumption.
\end{proof}

\begin{lemma}
    Under the assumptions of \cref{lemma:q_condition}, suppose additionally that \[
        \sup_{r \in [0,1]} q(r) = 1.
    \]
    Then \[
        \sup_{(\theta, P) \in \MDMP} \E_P\bk{L(\delta_Q(Y), \theta)} \le C + \alpha 
        (1-C)
    \]
    if and only if \[
        q(r) \le \one(r \le C). 
    \]
\end{lemma}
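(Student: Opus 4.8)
The plan is to invoke \cref{lemma:q_condition} with $u=1$ and then specialize the characterization \eqref{eq:q_condition} using the extra hypothesis $\sup_{r\in[0,1]}q(r)=1$. By \cref{lemma:q_condition}, the worst-case risk bound $\sup_{(\theta,P)\in\MDMP}\E_P[L(\delta_Q(Y),\theta)]\le C+\alpha(1-C)$ holds if and only if
\[
\sup_{a\le\alpha}\br{a\sup_{r\in[0,1]}q(r)(1-C)+(1-a)\sup_{r\ge C}q(r)(r-C)}\le\alpha(1-C).
\]
Substituting $\sup_{r}q(r)=1$ reduces the left-hand side to $\sup_{a\le\alpha}\br{a(1-C)+(1-a)M}$, where I write $M\equiv\sup_{r\ge C}q(r)(r-C)$.

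Next I would analyze the inner maximization over $a\in[0,\alpha]$. Since $0\le q(r)\le 1$ and $r-C\le 1-C$ for $r\in[C,1]$, we have $0\le M\le 1-C$, so the objective $a(1-C)+(1-a)M=M+a\pr{(1-C)-M}$ is affine and nondecreasing in $a$. Hence the supremum is attained at $a=\alpha$, and the condition becomes $\alpha(1-C)+(1-\alpha)M\le\alpha(1-C)$, i.e. $(1-\alpha)M\le0$. As $1-\alpha>0$ and $M\ge0$, this is equivalent to $M=0$.

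Finally I would translate $M=0$ into the stated form of $q$. Because $q(r)\ge0$ and $r-C>0$ for $r>C$, the equality $\sup_{r\ge C}q(r)(r-C)=0$ holds if and only if $q(r)=0$ for every $r>C$; the value $q(C)$ is unconstrained, since its contribution $q(C)(C-C)$ vanishes. Combined with the trivial bound $q(r)\le 1=\one(r\le C)$ on $\br{r\le C}$, this is exactly $q(r)\le\one(r\le C)$. For the converse, if $q(r)\le\one(r\le C)$ then $q(r)=0$ for $r>C$ gives $M=0$, so the displayed condition reduces to $\sup_{a\le\alpha}a(1-C)=\alpha(1-C)\le\alpha(1-C)$, and \cref{lemma:q_condition} returns the risk bound. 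The only subtlety worth stating carefully is the monotonicity argument pinning the maximizing $a$ at $\alpha$, which relies on the bound $M\le 1-C$; everything else is bookkeeping on the boundary $r=C$.
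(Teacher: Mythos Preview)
Your proposal is correct and follows essentially the same route as the paper: apply \cref{lemma:q_condition} with $u=1$, substitute $\sup_r q(r)=1$, and reduce the characterization \eqref{eq:q_condition} at $a=\alpha$ to $\sup_{r\ge C}q(r)(r-C)=0$, then translate this into $q(r)\le\one(r\le C)$. The only minor difference is that you explicitly justify why the supremum over $a$ is attained at $a=\alpha$ (via the monotonicity bound $M\le 1-C$), whereas the paper simply plugs in $a=\alpha$ as a necessary condition and checks sufficiency separately ``by inspection''; your argument is slightly more careful at the boundary $r=C$, but the substance is the same.
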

\begin{proof}
By \cref{lemma:q_condition},     $
        \sup_{(\theta, P) \in \MDMP} \E_P\bk{L(\delta_Q(Y), \theta)} \le C + \alpha 
        (1-C)
    $ is equivalent to \[
        \sup_{a \le \alpha} \br{
            a (1-C) + (1-a) \sup_{r \ge C} q(r) (r-C)
        } \le \alpha (1-C).
    \]

    This condition is implied by $q(r) \le \one(r \le C)$ by inspection. On the other
    hand, taking $a = \alpha$ yields \[
        \sup_{r \ge C} q(r) (r-C) \le 0
    \]
    This implies that $q(r) (r-C) = 0$ for all $r > C$, and hence $q(r) = 0$ for all $r
    \ge C$. Therefore $q(r) \le \one(r \le C)$. 
    \end{proof}

\thmdominate*
\begin{proof}
Consider \[
    \hat\Theta(Y) = \br{
        \theta \in \Theta: L(\tilde\delta(Y), \theta) \le \tilde R(Y)
    }.
\]    
Since $(\tilde \delta, \tilde R)$ is P-certified, $\hat\Theta$ is a confidence set:\[
    P(\theta \in \hat\Theta(Y)) = P(L(\tilde\delta(Y), \theta) \le \tilde R(Y)) \ge
    1-\alpha
\]
for all $(\theta, P) \in \mathcal M$. 

Consider $(\delta, R) \in \mathcal C_{P, \alpha, \epsilon}$ such that \[
    R(Y) \equiv \sup_{\theta \in \hat\Theta(Y)} L(\delta(Y), \theta) \le \pr{\inf_{a\in
    \mathcal A}
    \sup_
        {\theta \in \hat\Theta (Y)} L(a, \theta) + \epsilon} \minwith \sup_{\theta \in
    \hat\Theta(Y)} L(\tilde \delta(Y), \theta).
\] Such a choice of $\delta(\cdot)$ exists since one can choose $\delta(Y) = \tilde \delta
 (Y)$ if \[
    \sup_{\theta \in
    \hat\Theta(Y)} L(\tilde \delta(Y), \theta) \le \inf_{a\in
    \mathcal A}
    \sup_
        {\theta \in \hat\Theta (Y)} L(a, \theta) + \epsilon.
\] Otherwise, one can choose some action with worst-case risk bounded by $\inf_{a\in
    \mathcal A}
    \sup_
        {\theta \in \hat\Theta (Y)} L(a, \theta) + \epsilon$ by definition of the infimum.
        By construction, \[
            R(Y) \le \sup_{\theta \in \hat\Theta(Y)} L(\tilde \delta(Y), \theta) \le
            \tilde R (Y).
        \]
        Since $R(Y)$ and $\tilde{R}(Y)$ are ordered almost surely, they are also ordered in the sense of stochastic dominance. This concludes the proof. 
\end{proof}

\thminc*
\begin{proof}
    Consider \[
        \tilde\theta(Y) = \inf \br{\theta \in \Theta: L(\tilde \delta(Y), \theta) \le
        \tilde R
        (Y)}
    \]
Since $\Theta$ is compact $\tilde \theta(Y) \in \Theta$ a.s. Note that $P(\tilde \theta
(Y) \le \theta) \ge P\{L(\tilde \delta(Y), \theta) \le
        \tilde R
        (Y)\} \ge 1-\alpha$, and thus $\tilde\theta(Y)$ is a $1-\alpha$ confidence lower
        bound for $\theta$.

Again let $R(\theta) = \inf_{a\in \mathcal A} L(a,\theta)$, and note that $R(Y) = R(\hat\theta(Y))$ while $R(\tilde \theta(Y)) \le L(\tilde\delta(Y),\tilde\theta(Y))= \tilde{R} (Y)$.  
Decreasingness of $L(a,\theta)$ in $\theta$ implies that $R(\theta)$ is decreasing in $\theta.$  

Since $\hat\theta$ is a uniformly most accurate confidence lower bound, $P\br{\hat\theta(Y)\le \theta -t}\le P\br{\tilde\theta(Y)\le\theta-t}$ for all $t>0.$  It follows that $P\br{\hat\theta(Y)\land\theta\le t}\le P\br{\tilde\theta(Y)\land \theta\le t}$ for all $t\in\Theta,$ and thus that $\hat\theta(Y)\land\theta$ first order stochastically dominates $\tilde\theta(Y)\land \theta.$  Since first order stochastic dominance is preserved by monotone transformations, it follows that 
$R(\tilde\theta(Y)\land \theta)=R(\tilde\theta(Y))\lor R(\theta)$ first order stochastically dominates $R(\hat\theta(Y)\land \theta)=R(\hat\theta(Y))\lor R(\theta),$ and thus that for all $r>R(\theta)$,
\[\P\{R(Y)\ge r\}=\P\{R(\hat\theta(Y))\ge r\}\le \P\{R(\tilde\theta(Y))\ge r\}\le \P\{\tilde{R}(Y)\ge r\}.\]
This completes the proof.
\end{proof}

\thmedominate*
\begin{proof}
    Define \[
        E(Y, \theta) = \frac{L(\tilde \delta(Y), \theta)}{\tilde R(Y)}. 
    \]
Because $(\tilde \delta, \tilde R)$ is E-certified, $\E_P[E(Y, \theta)] \le 1$ for all $
(\theta, P) \in \M$. Observe that for all $\theta$, \[
    \frac{L(\tilde \delta (Y), \theta)}{E(Y,\theta)} = \tilde R(Y) = \sup_{\theta \in
    \Theta} \frac{L(\tilde \delta (Y), \theta)}{E(Y,\theta)}
\]

Pick $(\delta, R) \in \mathcal C_{E,\epsilon}$ such that 
\[
     R(Y) \equiv  \sup_{\theta \in \Theta} \frac{L(\delta(Y), \theta)}{E
    (Y,\theta)} \le \min\pr{
        \inf_{a\in \mathcal A} \sup_{\theta \in \Theta} \frac{L(a, \theta)}{E
    (Y,\theta) 
    }  + \epsilon , \tilde R(Y) }
\]
which exists since $\tilde R(Y)$ is the worst-case E-posterior loss of some action, namely
$\tilde\delta(Y)$, and thus \[
    \tilde R(Y) \ge \inf_{a\in \mathcal A} \sup_{\theta \in \Theta} \frac{L(a, \theta)}{E
    (Y,\theta) }. 
\]

By definition, $R(Y) \le \tilde R(Y)$. This completes the proof.
\end{proof}

\propeadoption*
\begin{proof}
    Note that, $P$-almost surely, \[
        L(\delta_Q(Y), \theta) = \frac{L(\delta_Q(Y), \theta)}{QR(Y) + (1-Q) C} \pr{QR(Y)
        + (1-Q) C} \le \max\pr{
            \frac{L(\delta(Y), \theta)}{R(Y)} , 1
        } C.
    \]
    where the equality holds since $Q R(Y) + (1-Q) C > 0$ a.s. by \cref{lemma:positive}.
    The inequality follows from $Q \le \one(R(Y) \le C)$ and thus $QR+(1-Q)C \le C$.

    Now, \[
        \E_{P}\max\pr{
            \frac{L(\delta(Y), \theta)}{R(Y)} , 1
        } \le 1 + \E_P[L/R] \le 2. 
    \] 
    This completes the proof. 
\end{proof}

\begin{lemma}
\label{lemma:positive}
    Let $(\delta(\cdot), R(\cdot))$ be an E-certified decision. For all $(\theta, P) \in
    \M$, \[
        P(R(Y) > 0) = 1. 
    \]
\end{lemma}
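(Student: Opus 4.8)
The plan is to argue by contradiction, exploiting the strict positivity of the loss (maintained throughout this section, $L(a,\theta) > 0$) together with the convention $1/0 = \infty$ built into the E-posterior. The key observation is that an atom of $R(Y)$ at $0$ would force the ratio $L(\delta(Y),\theta)/R(Y)$ to equal $+\infty$ on a set of positive probability, making the certificate expectation infinite and thereby contradicting the defining inequality of an E-certified decision.

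First I would fix an arbitrary $(\theta, P) \in \M$ and record the two relevant facts: the loss is strictly positive, and $R(Y)$ is a nonnegative loss bound, so $R(Y) \ge 0$ with $1/0 = \infty$. I then suppose, for contradiction, that $P(R(Y) = 0) = p > 0$.

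Next I would bound the certificate expectation from below by restricting the (everywhere nonnegative) integrand to the event $\{R(Y) = 0\}$. On this event $L(\delta(Y),\theta) > 0$ while $R(Y) = 0$, so $L(\delta(Y),\theta)/R(Y) = +\infty$, whence
\[
\E_P\!\left[\frac{L(\delta(Y),\theta)}{R(Y)}\right] \ge \E_P\!\left[\frac{L(\delta(Y),\theta)}{R(Y)}\,\one(R(Y)=0)\right] = +\infty,
\]
since we are integrating the value $+\infty$ over a set of probability $p > 0$. This contradicts the E-certificate inequality $\E_P[L(\delta(Y),\theta)/R(Y)] \le 1 < \infty$. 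Hence the supposition fails, $P(R(Y)=0)=0$, and combining with $R(Y)\ge 0$ yields $P(R(Y)>0)=1$; as $(\theta,P)$ was arbitrary, this holds for all $(\theta,P)\in\M$.

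The only point requiring care—and the nearest thing to an obstacle—is the measure-theoretic handling of the convention $1/0=\infty$: one must note that a strictly positive numerator over a zero denominator on a positive-probability set produces a genuinely \emph{infinite} integral (not merely a large one), which is what cleanly contradicts the finite bound. No other step presents any difficulty.
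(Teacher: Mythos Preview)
Your argument is correct and is in fact more direct than the paper's. The paper proves the lemma by invoking \cref{thm:edominate}: given an arbitrary E-certified $(\delta,R)$, it passes to a dominating pair $(\delta',R')\in\mathcal C_{E,\epsilon}$ with $R'(Y)\le R(Y)$ a.s., then uses the explicit formula $R'(Y)=\sup_{\theta'} L(\delta'(Y),\theta')/E(Y,\theta')$ to argue that $R'(Y)=0$ would force $E(Y,\theta)=\infty$, contradicting $\E_P[E(Y,\theta)]\le1$. Your approach bypasses the essentially-complete-class machinery entirely and works straight from the defining inequality $\E_P[L(\delta(Y),\theta)/R(Y)]\le1$: an atom of $R(Y)$ at zero, together with $L>0$, makes the integrand $+\infty$ on a set of positive probability, which is already a contradiction. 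What the paper's detour buys is that nonnegativity of $R'(Y)$ is established along the way from the E-posterior formula, whereas you take $R(Y)\ge0$ as a standing assumption (which is implicit in the paper's setup but not spelled out in Definition~\ref{defn:e-cert}). Aside from that, your route is strictly simpler and avoids any dependence on \cref{thm:edominate}.
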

\begin{proof}
    By \cref{thm:edominate}, it suffices to consider $(\delta, R) \in \mathcal C_{E,
    \epsilon}$ for some $\epsilon > 0$, since otherwise one can find $\tilde R$ that lower
    bounds $R(Y)$. For such a pair, \[
        R(Y) = \sup_{\theta' \in \Theta} L(\delta(Y), \theta') / E(Y, \theta') \ge L
        (\delta(Y), \theta) / E(Y, \theta) \ge 0.
    \]
    Note that $R(Y) = 0$ implies that $E(Y, \theta) = \infty$, since $L(a, \theta)
    > 0$. However, since $\E_P [E
    (Y, \theta)] \le 1$, \[
        P(E(Y, \theta) = \infty) = 0 \ge P(R(Y) = 0) = 0. 
    \]
\end{proof}

\section{Optimal adoption based on P-certificates}\label{sec: optimal adoption}

\cref{prop:adoption} derives optimal adoption rules over a constrained class of decision rules $\mathcal{Q}(u).$  This section extends these results, deriving optimal adoption rules over the class of all adoption rules that depend only on $R(Y).$

\begin{restatable}{prop}{propadoption2} 
Let $u
\in [0,1]$. Consider \[
        \mathcal Q^*=\bigcup_{u\in[0,1]}\mathcal Q(u) = \br{q(a,r) = q(r)}.
    \]
     Suppose the decision-maker's beliefs satisfy:
    \begin{enumerate}
         \item  $\pi$ is optimistic: $\E_\pi[L(\delta(Y),
    \theta)
    \mid R(Y) = r] \le r$
    \item  $\MDMP, R(\cdot), \delta(\cdot)$ is sufficiently rich: for any joint
    distribution $G$ over $(L, R) \in [0,1]^2$ with $G(
    \br{L\le R}) \ge
    1-\alpha$, there exists $ (\theta, P) \in \MDMP$ under which $(L(\delta(Y), \theta) ,
    R (Y)) \sim G$.
     \end{enumerate}
    Then an optimal adoption decision over $\mathcal Q = \mathcal{Q}^*$ for
    \eqref{eq:DM_preference} is 
    \[q_{q^*}(r)=(u-q^*)\one\{r\le C\}+\one\{\E_\pi[L|R=r]\le C< r\}\min\left\{\frac{\alpha (1-C)}{(1-\alpha)(r-C)}q^*,u-q^*\right\}\]
    where $q^*\in[0,u]$ solves $\min_{q^*\in[0,u]}\E_{Q\sim q_{q^*}, \pi} \bk{ L(\delta_Q(Y), \theta) }.$
\end{restatable}

\begin{proof} 
By \cref{lemma:q_condition}, the constraint in \eqref
 {eq:DM_preference} holds if and only if \eqref{eq:q_condition} does. In turn, \eqref
 {eq:q_condition} holds if and only if \[q(r) (1-C)\le \frac{\alpha}{a}u(1-C)-\frac
 {1-a}{a}\sup_{r}q(r)(r-C)\] for all $(a,r)$. If $\sup_{r}q(r)(r-C)/(1-C)>u\alpha$ this
 constraint necessarily fails, while if $\sup_{r}q(r)(r-C)/(1-C)\le u\alpha$ the right
 hand side is minimized at $a=\alpha,$ so the constraint is equivalent to 
\begin{equation}\label{eq: new constraint}
\sup_{r}q(r)\le u - \frac{1-\alpha}{\alpha} \sup_{r}\frac{q(r)(r-C)}{1-C}.
\end{equation}

By the law of iterated expectations, 
\begin{equation}\label{eq:objective}
\E_{\pi,q}[QL+(1-Q)C]-C=\int q(r)(\E_\pi[L|R(Y)=r]-C)d\pi_R(r),
\end{equation} for $\pi_R(r)$ the marginal distribution of $R(Y)$ under $\pi.$ The
 decision-maker wants to choose $q$ to maximize this expression subject to the
 constraint (\ref{eq: new constraint}).  Since \eqref{eq:objective} is pointwise
 increasing in $q(r)$ for all $r$ where $\E_\pi[L\mid R(Y)=r]\le C,$ it is without loss to
 consider rules which exhaust the constraint on this set,
\[q(r)=(u-q^*)\one\{r\le C\}+\one\{\E_\pi[L|R=r]\le C< r\}\min\left\{\frac{\alpha (1-C)}{
(1-\alpha)(r-C)}q^*,u-q^*\right\}\]
for a constant $q^*\in[0,u].$ We are minimizing a continuous function over a compact set, so a minimizer exists, and the result is immediate.
\end{proof}

\section{Modifying E-posteriors to optimize post-adoption bound}
\label{asec:truncateep}

Let $\mathcal C_{E,1+\gamma}$ denote the following
class of decision rules: For an $e$-variable $E(\theta, Y)$, define $E_\gamma(\theta, Y) =
\gamma E(\theta, Y) + 1$,\footnote{We can also view $E_\gamma = (1+\gamma) \pr{
    \frac{\gamma}{1+\gamma} E(\theta,  Y) + \frac{1}{1+\gamma} } = (1+\gamma) S^{[\gamma/
     (1+\gamma)]}_{\theta}(Y)$, where $S^{[c]}_ {\theta}(Y) = c E(\theta, Y) +(1-c)$ is
     proposed by \citet{grunwald2023posterior} as a modification to e-variables that
     produces upper bounded E-posteriors. Our subsequent result gives some guidance on
     how different choices of $\gamma$ impact the resulting guarantee. } consider
\[
    \delta(Y) \in \argmin_{a \in \mathcal A} \sup_{\theta\in\Theta} \frac{L(a, \theta)}
    {E_\gamma
    (\theta, Y)} \quad R(Y) = \min_{a \in \mathcal A} \sup_{\theta\in\Theta} \frac{L(a,
    \theta)}{E_\gamma (\theta, Y)}.\numberthis \label{eq:truncated_e}
\]
In this section, let us assume $\mathcal A$ is finite for simplicity, and thus the minimum
always exists. Since $E_\gamma/(1+\gamma)$ is an e-variable, such a decision rule is
E-certified with multiple $1+\gamma$. 

We show that adoptions rules with $Q \le \one(R(Y) \le C)$ entail risk at most $(1+\gamma) C$ post adoption, in contrast to
\cref{prop:eadoption}. This guarantee holds because \[
    \E_P
             \left[\max\pr{\frac{L(\delta(Y),\theta)}{R(Y)}, 1}\right] \le 1+\gamma. 
\]
Moreover, for all decisions $(\delta, R)$ satisfying the above, decision rules in
$\mathcal C_{E,1+\gamma}$ are essentially complete with respect to weak
dominance.

 \begin{restatable}{prop}{proptruncated}
 \label{prop:truncated}
     Given $(\delta, R) \in \mathcal C_{E,1+\gamma}$, we have that for any adoption
     decision $Q \le \one(R(Y) \le C)$, \[
         \E_P\bk{
             L(\delta_Q(Y), \theta)
         } \le \E_P\bk{
             \max\pr{\frac{L(\delta(Y),\theta)}{R(Y)}, 1}C
         } \le (1+\gamma)C \text{ for all $(\theta, P) \in \M$}
     \]
     Moreover, for any pair $(\tilde \delta, \tilde R)$ such that $
         \E_P\bk{\max\pr{\frac{L(\tilde \delta(Y),\theta)}{\tilde R(Y)}, 1}} \le 1+ \gamma
     $, there exists some $(\delta, R) \in \mathcal C_{E,1+\gamma}$ that weakly dominates it.
 \end{restatable}

\begin{proof}
    Observe that \[
        L(\delta_Q(Y), \theta) = \frac{L(\delta_Q(Y), \theta)}{QR(Y) + (1-Q) C} \pr{QR(Y)
        + (1-Q) C} \le \max\pr{
            \frac{L(\delta(Y), \theta)}{R(Y)} , 1
        } C.
    \]
    Since \[
        \E_P\bk{
            \max\pr{
            \frac{L(\delta(Y), \theta)}{R(Y)} , 1
        } 
        } \le \E_P\bk{
            \max\pr{
                \frac{L(\delta(Y), \theta)}{L(\delta(Y), \theta)/E_\gamma(Y,\theta)} , 1
            }
        } = \E_P\bk{E_\gamma(Y,\theta)} \le 1+\gamma. 
    \]
    this proves the first claim.

    For the second claim, let \[
        E_\gamma(Y, \theta) = \max\pr{
            \frac{L(\tilde \delta(Y), \theta)}{\tilde R(Y)} , 1
        }.
    \]
    Since its expectation is bounded by $1+\gamma$, $(E_\gamma(Y,\theta) - 1)/ \gamma$ is an
    E-variable. Thus
    there
    is some $(\delta, R) \in \mathcal C_{E,1+ \gamma}$ that corresponds to $E_\gamma(Y,\theta)$. Note
    that \[
        R(Y) = \min_{a\in \mathcal A} \sup_{\theta\in\Theta} \frac{L(a,\theta)}{E_\gamma
        (Y,\theta)} \le \sup_{\theta\in \Theta} \min\pr{\tilde R(Y)\cdot L(\tilde\delta
        (Y),\theta)/L
        (\tilde\delta(Y), \theta), L(\tilde\delta(Y), \theta)} \le \tilde R(Y).
    \]
    This completes the proof. 
\end{proof}

\bibliographystyle{ecca}
\bibliography{main}
\end{document}